\newtheorem{theorem}{Theorem}[section]
\newtheorem{lemma}[theorem]{Lemma}
\title{A new quadratic-time number-theoretic algorithm to solve matrix multiplication problem}
\author{
Shrohan Mohapatra \\
School of Electrical Sciences \\
Indian Institute of Technology Bhubaneswar \\
Email ID: sm32@iitbbs.ac.in
}
\begin{document}

\maketitle

\abstract{
There have been several algorithms designed to optimise matrix multiplication. From schoolbook method with complexity $O(n^3)$ to advanced tensor-based tools with time complexity $O(n^{2.3728639})$ (lowest possible bound achieved), a lot of work has been done to reduce the steps used in the recursive version. Some group-theoretic and computer algebraic estimations also conjecture the existence of an $O(n^2)$ algorithm. This article discusses a quadratic-time number-theoretic approach that converts large vectors in the operands to a single large entity and combines them to make the dot-product. For two $n \times n$ matrices, this dot-product is iteratively used for each such vector. Preprocessing and computation makes it a quadratic time algorithm with a considerable constant of proportionality. Special strategies for integers, floating point numbers and complex numbers are also discussed, with a theoretical estimation of time and space complexity.
}

\section{Introduction}

The naive algorithm, i.e. the direct implementation of the matrix multiplication, with $\omega = 3$ grows slower rapidly with increase in size, which was for long thought to be the fastest. The first successful attempt in the sub-cubic domain was by Strassen\cite{strassen_first}. This approach attacked the problem by a divide-and-conquer strategy using the special algebraic identities, and has been commercially well-appreciated to be asymptotically faster than the normal method. This led to a plethora of subsequent attempts to discover tensor-based methods of reducing the number of recursive invocations at a fundamental level, leading Raz to a solution in NC \cite{raz}, Coppersmith and Winograd \cite{coppersmith} to a solution based on Sch\"onhage's theorem\cite{schonhage} and Salem-Spencer theorem on arithmetic progression\cite{salemSpencer}, and a lot more \cite{williams,kleinberg,pan}. The lowest ever bound achieved till now is $\omega = 2.3728639$ \cite{legall}. Despite all of this, these methods are difficult to implement on a digital computer. Post Strassen's algorithm design, all of the work started with unique algebraic identities in trilinear form, and then is taken to highly scaled tensor power, in order to reduce that identity to the general matrix multiplication problem. When the identity is taken to such high tensor powers, there is a requirement of large amount of memory and subsequently large number of additions and subtractions. Also partial multiplication algorithms render some errors as well \cite{williams,kleinberg,schonhage}, and recovering them adds to the complexity of both the algorithm as well as the manual sketching of its computable model. This results in the consequence that all of these remain just as sketches; all work has apparently been done to \textit{show the existence of an algorithm with the corresponding exponent $\omega$}.
\\ \\
This article focusses on development of an algorithm that uses some simple identities from number theory and the concept of convolution to accelerate the calculation of the dot product in constant time and consequently reduce the overall time complexity. The following explains the organisation of the rest of the paper. In section \ref{proof1}, I present some prerequisites to interface with my presentation of the algorithm, namely some important algebraic forms that help in matrix multiplication and some lemmas from number theory. Then follows section \ref{ProposedAlgorithm} where the algorithm has been described along with apt examples of the working, individually for positive integers, negative integers, floating point numbers and complex numbers. In section \ref{experiment}, some interesting experimental results based on time complexity and space complexity have been shown.

\section{Background} \label{proof1}

In this section, I present a follow-up of the bilinear and trilinear forms for total and partial matrix multiplications, followed by Sch\"onhage theorem. What follows are some trivial number-theoretic lemmas, based on the sum and product of numbers, which form the substance of proving the space complexity of the algorithm described in section \ref{pseudocode7}. 

\subsection{Some theorems on bilinear and trilinear algorithms} \label{schonhage}

From \cite{pan2}, the following defines the representation of bilinear algorithm for $(m,n,p)$ matrix product and then the theorem gives a bound on the matrix multiplication.

\begin{theorem} \label{T1}
Given two matrices, $A = [a_{ij}]_{m \times n}$, $B = [b_{ij}]_{n \times p}$, if there exists functions $f,f^{'},f^{''}$ for which the equation,
\begin{equation*}
\sum_{j} a_{ij} b_{jk} = \sum_{q=0}^{M-1} f^{''}(k,i,q) \sum_{j}f(i,j,q) a_{ij} \sum_{j}f^{'}(j,k,q) b_{jk} 
\end{equation*}
becomes an identity, the time complexity of the matrix multiplication is $O(N^\omega)$ where,
\begin{equation*}
\omega \leq 3\frac{log(M)}{log(mnp)}
\end{equation*}
\end{theorem}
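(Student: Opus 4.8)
The plan is to read the hypothesised identity as the statement that the $(m,n,p)$ matrix-multiplication bilinear map admits a \emph{rank-$M$} bilinear algorithm: each of the $M$ terms indexed by $q$ is a single multiplication of one linear form $\sum_j f(i,j,q)\,a_{ij}$ in the entries of $A$ by one linear form $\sum_j f'(j,k,q)\,b_{jk}$ in the entries of $B$, and these $M$ products are then recombined linearly through the coefficients $f''(k,i,q)$. Writing $\langle m,n,p\rangle$ for the associated matrix-multiplication tensor and $R(\cdot)$ for its tensor rank (the minimum number of essential, i.e. nonscalar, multiplications), the identity says precisely that $R(\langle m,n,p\rangle)\le M$. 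The goal is then to turn this rectangular, one-shot bilinear algorithm into a recursive algorithm for \emph{square} $N\times N$ products and to read off the exponent.

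First I would symmetrise. The matrix-multiplication tensor is cyclically invariant: permuting the roles of the three index sets turns $\langle m,n,p\rangle$ into $\langle n,p,m\rangle$ and $\langle p,m,n\rangle$, and tensor rank is unchanged under such a permutation, so each of these also has rank at most $M$. I would then invoke submultiplicativity of rank under the Kronecker (tensor) product of bilinear maps, together with the composition rule $\langle m,n,p\rangle\otimes\langle m',n',p'\rangle=\langle mm',nn',pp'\rangle$. Tensoring the three cyclic copies yields
\begin{equation*}
R\big(\langle mnp,\,mnp,\,mnp\rangle\big)\;\le\;R(\langle m,n,p\rangle)\,R(\langle n,p,m\rangle)\,R(\langle p,m,n\rangle)\;\le\;M^{3},
\end{equation*}
so a \emph{square} $q\times q$ product with $q=mnp$ is computed by a bilinear algorithm using at most $M^{3}$ multiplications.

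Then comes the recursion. Setting $q=mnp$ and $R=M^{3}$, I would iterate the square algorithm by blocking: to multiply two $N\times N$ matrices with $N=q^{t}$, partition each into a $q\times q$ array of $(N/q)\times(N/q)$ blocks and apply the rank-$R$ algorithm at the top level, replacing each of its $R$ scalar multiplications by a recursive product of $(N/q)\times(N/q)$ blocks, while the linear forms and the recombination contribute only $O(N^{2})$ additions. This gives the divide-and-conquer recurrence
\begin{equation*}
T(N)\;=\;R\,T\!\big(N/q\big)+c\,N^{2}.
\end{equation*}
Since $M^{3}$ is the number of multiplications in a bilinear algorithm for the $q\times q$ product, it is at least the true rank $q^{2}$, so $\log_{q}R\ge 2$ and the master theorem gives $T(N)=O\big(N^{\log_{q}R}\big)$. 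Reading off the exponent,
\begin{equation*}
\omega\;\le\;\log_{q}R\;=\;\frac{\log(M^{3})}{\log(mnp)}\;=\;3\,\frac{\log M}{\log(mnp)},
\end{equation*}
as claimed; arbitrary $N$ is handled by padding up to the next power of $q$, which affects only constants.

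The step I expect to be the main obstacle is the symmetrisation together with the multiplicativity claim: one must check carefully that the Kronecker product of the three cyclically permuted bilinear algorithms genuinely computes the square $(mnp)$-product, i.e. that the index bookkeeping lines up so the combined bilinear forms assemble into one large matrix product rather than a merely block-structured artefact, and that tensor rank is indeed both permutation-invariant and submultiplicative under $\otimes$. By comparison, bounding the additive overhead by $O(N^{2})$ per level (so that it does not dominate once $\log_{q}R\ge 2$) and the final master-theorem bookkeeping are routine.
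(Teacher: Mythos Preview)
Your argument is the standard and correct one: interpret the identity as a rank-$M$ bilinear algorithm for $\langle m,n,p\rangle$, symmetrise via the three cyclic permutations and submultiplicativity of rank to obtain $R(\langle mnp,mnp,mnp\rangle)\le M^{3}$, and then recurse on $q\times q$ blocks with $q=mnp$ to read off $\omega\le\log_{q}M^{3}=3\log M/\log(mnp)$. The only point worth tightening is the justification that $M^{3}\ge q^{2}$ so that the master theorem lands in the leaf-dominated case: this follows from the elementary lower bound $R(\langle m,n,p\rangle)\ge\max(mn,np,mp)$, whence $M^{3}\ge mn\cdot np\cdot mp=(mnp)^{2}$.

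As for comparison with the paper: the paper does not give a proof of this theorem at all. It merely states the result, attributes it to Pan's survey, and remarks that Theorems~\ref{T1} and~\ref{T2} ``are proven'' with a citation to Cohn--Umans. So there is nothing in the paper against which to compare your argument beyond noting that your proposal supplies exactly the classical Sch\"onhage--Strassen symmetrisation-and-recursion proof that the cited literature contains, whereas the paper itself treats the statement as background and moves on.
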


Similarly, from \cite{pan2}, the following defines the trilinear algorithm for the matrix multiplication and provides the same bound. Both theorems \ref{T1} and \ref{T2} are proven \cite{cohn}.

\begin{theorem} \label{T2}
Given three matrices, $A = [a_{ij}]_{m \times n}$, $B = [b_{ij}]_{n \times p}$, $Z = [z_{ij}]_{p \times n}$, if there exists functions $f,f^{'},f^{''}$ for which the equation,
\begin{equation*}
\sum_{k,i}\sum_{j} a_{ij} b_{jk} z_{ki} = \sum_{q=0}^{M-1} \sum_{k,i} f^{''}(k,i,q)z_{ki} \sum_{j}f(i,j,q) a_{ij} \sum_{j}f^{'}(j,k,q) b_{jk} 
\end{equation*}
becomes an identity, the time complexity of the matrix multiplication is $O(N^\omega)$ where,
\begin{equation*}
\omega \leq 3\frac{log(M)}{log(mnp)}
\end{equation*}
\end{theorem}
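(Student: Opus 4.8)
The plan is to read the trilinear identity not as an algebraic curiosity but as a rank decomposition of the matrix-multiplication tensor, and then to amplify it by taking tensor powers. Writing the right-hand side as $\sum_{q=0}^{M-1}\bigl(\sum_{k,i} f''(k,i,q)z_{ki}\bigr)\bigl(\sum_{i,j} f(i,j,q) a_{ij}\bigr)\bigl(\sum_{j,k} f'(j,k,q) b_{jk}\bigr)$, each summand is a product of three linear forms --- one in the entries of $A$, one in those of $B$, and one in those of $Z$. Hence the identity exhibits the structure tensor $T=\langle m,n,p\rangle$ of the $(m,n,p)$ product as a sum of $M$ rank-one tensors, so its tensor rank satisfies $R(T)\le M$. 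The first step is therefore to make this correspondence precise and to read off the $M$ bilinear products $P_q=\bigl(\sum_{i,j} f(i,j,q) a_{ij}\bigr)\bigl(\sum_{j,k} f'(j,k,q) b_{jk}\bigr)$ as the \emph{essential multiplications} of an algorithm computing $AB$, the entries of the product being recovered as fixed linear combinations of the $P_q$ dictated by $f''$.

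Next I would symmetrize and recurse. Because the trilinear form $\sum_{k,i,j} a_{ij}b_{jk}z_{ki}$ is invariant under cyclically permuting the roles of $A$, $B$ and $Z$, the rank bound transfers to the cyclically shifted products, giving $R(\langle n,p,m\rangle)\le M$ and $R(\langle p,m,n\rangle)\le M$. Using that the matrix-multiplication tensor is multiplicative under the Kronecker product, $\langle m,n,p\rangle\otimes\langle m',n',p'\rangle\cong\langle mm',nn',pp'\rangle$, and that tensor rank is submultiplicative, $R(S\otimes T)\le R(S)R(T)$, the tensor product of the three cyclic copies yields a \emph{square} product tensor $\langle mnp,mnp,mnp\rangle$ of rank at most $M^3$. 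Setting $s=mnp$ and $r=M^3$ and taking the $t$-th Kronecker power gives $R(\langle s^t,s^t,s^t\rangle)\le r^t$, so that $N\times N$ matrices with $N=s^t$ can be multiplied using at most $r^t=N^{\log_s r}=N^{3\log M/\log(mnp)}$ essential multiplications, which is exactly the claimed exponent $\omega\le 3\frac{\log M}{\log(mnp)}$.

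The main obstacle is to turn this bound on essential multiplications into a bound on the total arithmetic cost, since the linear forms and the reconstruction step also incur additions and scalar multiplications. Here I would set up the recurrence $C(N)=r\,C(N/s)+O((N/s)^2)$, where the quadratic term accounts for the additive pre- and post-processing at each level of the recursion; because $\log_s r=3\log M/\log(mnp)>2$ in every nontrivial case, the leading term dominates and the master theorem gives $C(N)=O(N^{\log_s r})$, so the additions do not inflate the exponent. The remaining routine points --- padding an arbitrary $N$ up to the next power of $s$, which increases $N$ by only a bounded factor and is absorbed into the $O$-constant, together with the standard passage from $\omega$ to $\omega+\epsilon$ --- then complete the argument; this is precisely the line of reasoning carried out in \cite{cohn}.
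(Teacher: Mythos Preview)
Your sketch is the standard and correct argument: interpret the trilinear identity as a rank-$M$ decomposition of the structure tensor $\langle m,n,p\rangle$, symmetrize via the cyclic invariance to obtain $R(\langle mnp,mnp,mnp\rangle)\le M^3$, and then recurse, with the additive overhead handled by the master-theorem recurrence and arbitrary $N$ handled by padding. One small point worth tightening is the reading of the linear forms: as written in the statement the inner sums $\sum_j f(i,j,q)a_{ij}$ still carry a free index $i$, so to get genuine rank-one summands you are implicitly replacing them by full linear forms $\sum_{i,j}f(i,j,q)a_{ij}$ over all entries of $A$ (and similarly for $B$, $Z$). That is the intended meaning, and with it your argument goes through.

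As for comparison with the paper: the paper does not actually give a proof of this theorem. It states the result and immediately defers to the literature with the sentence ``Both theorems \ref{T1} and \ref{T2} are proven \cite{cohn}.'' Your proposal is precisely an outline of that cited argument (indeed you invoke \cite{cohn} yourself at the end), so there is no discrepancy to report --- you have supplied what the paper only references.
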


Strassen's algorithm was originally presented as a normal recursion \cite{strassen_first}, and then represented in the form shown in theorems \ref{T1} and \ref{T2} \cite{stothers}. Majority of the subsequent prevalent approaches to this problem start with the Sch\"onhage's theorem \cite{williams,kleinberg,coppersmith}, shown below, which has also been proven \cite{schonhage}. This defines $\lambda$-trilinear form of the matrix multiplication.

\begin{theorem}
Assume given a field $F$, coefficients $\alpha_{i,j,h,l}$, $\beta_{j,k,h,l}$, $\gamma_{k,i,h,l}$ in $F(\lambda)$ and polynomials $f_g$ over $F$ such that
\begin{equation*}
\sum_{l=1}^{L} \sum_{i,j,h} \alpha_{i,j,h,l} x_{ij}^{[h]} \sum_{j,k,h} \beta_{i,j,h,l} y_{jk}^{[h]} \sum_{k,i,h} \gamma_{k,i,h,l} z_{ki}^{[h]} \\
= \sum_{h} \sum_{i=1}^{m_h} \sum_{j=1}^{n_h} \sum_{k=1}^{p_h} x_{ij}^{]h]} y_{jk}^{[h]} z_{ki}^{[h]} + 
\\ \sum_{g > 0} \lambda^g f_g(x_{ij}^{[h]},y_{jk}^{[h]},z_{ki}^{[h]})
\end{equation*}
is an identity in $x_{ij}^{[h]},y_{jk}^{[h]},z_{ki}^{[h]},\lambda$. Then given $\epsilon > 0$, one can construct an algorithm to multiply $N \times N$ matrices in $O(N^{3\tau+\epsilon})$ operations where $\tau$ satisfies
\begin{equation*}
L = \sum_{h} (m_h n_h p_h)^{\tau}
\end{equation*}
\end{theorem}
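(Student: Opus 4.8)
The plan is to reduce the statement to the single-product exponent bound of Theorem \ref{T1} (equivalently Theorem \ref{T2}), after disentangling the two features that separate this $\lambda$-trilinear identity from an ordinary bilinear algorithm: the approximation error carried by the $\lambda^g f_g$ terms (border rank rather than exact rank), and the simultaneous, direct-sum structure in which the $L$ products compute several independent matrix products $\langle m_h,n_h,p_h\rangle$ at once. I would treat these in two separate stages, each paying only a subexponential price that is absorbed into the $+\epsilon$ in the final exponent.

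First I would remove $\lambda$. The hypothesis is an approximate bilinear computation of bounded $\lambda$-degree $d$ using $L$ products. By a standard interpolation (Bini-type) argument, such an approximate computation can be converted into an exact one using at most $L\cdot P(d)$ products for some fixed polynomial $P$: evaluating the identity at enough distinct values of $\lambda$ and solving the resulting linear system recovers exact coefficients. On passing to the $n$-th tensor power the product count becomes $L^n$ while the interpolation overhead grows only like $P(dn)$, which is polynomial in $n$ and therefore contributes nothing to the limiting ratio of logarithms. Hence, up to the promised $\epsilon$, I may assume an exact simultaneous computation of $T:=\bigoplus_h \langle m_h,n_h,p_h\rangle$ with $L$ products.

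Next I would exploit the direct-sum structure by taking high tensor powers. Since tensor product distributes over direct sum and matrix-product tensors multiply as $\langle m,n,p\rangle\otimes\langle m',n',p'\rangle=\langle mm',nn',pp'\rangle$, the $n$-th power expands as
\begin{equation*}
T^{\otimes n}=\bigoplus_{\mathbf{e}:\,\sum_h e_h=n}\binom{n}{e_1,\dots,e_k}\Big\langle \textstyle\prod_h m_h^{e_h},\ \prod_h n_h^{e_h},\ \prod_h p_h^{e_h}\Big\rangle,
\end{equation*}
and submultiplicativity of the product count bounds the whole thing by $L^n$. Using the definition of $\tau$ with the multinomial theorem, $\sum_{\mathbf{e}}\binom{n}{\mathbf{e}}\prod_h(m_hn_hp_h)^{\tau e_h}=\big(\sum_h(m_hn_hp_h)^{\tau}\big)^n=L^n$; since there are only polynomially many compositions $\mathbf{e}$, some dominant term $\mathbf{e}^\ast$ carries a $1/\mathrm{poly}(n)$ fraction of this total. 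That term is a disjoint sum of $s=\binom{n}{\mathbf{e}^\ast}$ copies of a single large matrix product $\langle M,N',P\rangle$ with $MN'P=\prod_h(m_hn_hp_h)^{e_h^\ast}$.

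The crux, and the step I expect to be the main obstacle, is converting those $s$ disjoint copies into one genuine matrix multiplication at amortized cost, i.e. showing $\langle M,N',P\rangle$ admits a bilinear algorithm with roughly $L^n/s$ products rather than the trivial $L^n$. This is precisely the nontrivial phenomenon behind Sch\"onhage's theorem: a disjoint sum of matrix products is strictly cheaper than the naive per-summand bound suggests. I would handle it with Sch\"onhage's disjointness trick, tensoring the $s$-fold sum with itself and regrouping coordinates so that a single $\langle\cdot,\cdot,\cdot\rangle$ of the combined dimensions is embedded, driving the per-copy cost down by the factor $s$. Feeding the resulting estimate $R(\langle M,N',P\rangle)\lesssim L^n/s$ into Theorem \ref{T1} yields $\omega\le 3\,\log_{MN'P}(L^n/s)$, and the dominant-term asymptotics force the right-hand side to $3\tau$ as $n\to\infty$. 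Carrying the $\lambda$-error uniformly through this second tensor power, and verifying the multinomial estimate is tight enough, is where the genuine care is required; everything else is bookkeeping with submultiplicativity of rank and the logarithmic bound of Theorem \ref{T1}.
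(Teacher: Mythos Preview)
The paper does not actually prove this theorem. It is quoted as background (Sch\"onhage's theorem) and the text explicitly defers the proof to the original reference: ``which has also been proven \cite{schonhage}.'' There is therefore no in-paper argument to compare your proposal against.

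That said, your outline is essentially the standard route by which the $\tau$-theorem is established in the literature: (i) eliminate the border-rank parameter $\lambda$ by Bini-type interpolation at a cost polynomial in the degree, hence subexponential in the tensor power; (ii) take the $n$-th tensor power of the direct sum $\bigoplus_h\langle m_h,n_h,p_h\rangle$ and expand multinomially; (iii) extract a dominant composition and use Sch\"onhage's lemma that $s$ disjoint copies of a single $\langle M,N',P\rangle$ computed with $r$ products force $\omega\le 3\log_{MN'P}(\lceil r/s\rceil)$. Your sketch is accurate in spirit, and you correctly flag step (iii) as the genuine content. The one place where your description is slightly loose is the mechanism of that step: it is not really ``tensoring the $s$-fold sum with itself and regrouping,'' but rather a recursive embedding in which the $s$ available copies are used to absorb the blow-up from one further level of the bilinear algorithm, yielding the amortized bound $r/s$ per copy in the limit. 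If you tighten that paragraph you have a faithful summary of Sch\"onhage's original proof, which is more than the present paper attempts.
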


The above theorem is an approximate representation of several independent matrix multiplications, of dimensions $m_h \times n_h$ times $n_h \times p_h$, as a part of $L$ bilinear multiplications, where "$[h]$" is the superscript of the matrix element involved in the product \cite{coppersmith}. The approximation is rendered in a more refined way as follows:

\begin{equation*}
\sum_{l=1}^{L} \sum_{i,j,h} \alpha_{i,j,h,l} x_{ij}^{[h]} \sum_{j,k,h} \beta_{i,j,h,l} y_{jk}^{[h]} \sum_{k,i,h} \gamma_{k,i,h,l} z_{ki}^{[h]} \\
= \sum_{h} \sum_{i=1}^{m_h} \sum_{j=1}^{n_h} \sum_{k=1}^{p_h} x_{ij}^{]h]} y_{jk}^{[h]} z_{ki}^{[h]} + O(\lambda)
\end{equation*}

Previous works on reduction of the complexity \cite{williams,kleinberg,coppersmith} post Strassen's algorithm \cite{strassen_first} (that runs in $O(n^{\log_{2}7})$) start with some special identity in this $\lambda$-trilinear form and then take an asymptotically high tensor power of the same. As is obvious, the basic bilinear and trilinear forms embed in themselves a visibly implementable algorithm. But algorithms in $\lambda$-trilinear form are hard in terms of digital computation merely due to high tensor powers. In section \ref{proposal}, I shall present an algebraic number-theoretic identity in bilinear form that is visibly a transformation of matrix multiplication. I shall not be proving the existence of an algorithm using theorem \ref{T2}, rather would be presenting the pseudocode directly for different cases in section \ref{ProposedAlgorithm}.

\subsection{Some number-theoretic proofs} \label{proof2}

\begin{lemma} \label{Ref1}
Largest number of digits in the product of two $m$-digit numbers is $2m$, where $m \in \mathbb{N}$.
\end{lemma}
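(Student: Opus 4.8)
The plan is to bound the number of digits in a product by bounding the product itself, then translating a numerical upper bound into a digit count via logarithms. First I would fix a base (the statement implicitly works in base ten, so I will take that), and recall that a number $x$ has exactly $\lfloor \log_{10} x \rfloor + 1$ digits. The largest $m$-digit number is $10^m - 1$, so any two $m$-digit numbers $a$ and $b$ satisfy $a, b \le 10^m - 1$.

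Next I would multiply these bounds to control the product: since $a, b \le 10^m - 1 < 10^m$, we have
\begin{equation*}
ab \le (10^m - 1)^2 = 10^{2m} - 2\cdot 10^m + 1 < 10^{2m}.
\end{equation*}
From $ab < 10^{2m}$ it follows that $\log_{10}(ab) < 2m$, hence the number of digits of $ab$, namely $\lfloor \log_{10}(ab) \rfloor + 1$, is at most $2m$. This establishes that the product can never have more than $2m$ digits.

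To confirm that $2m$ is actually the \emph{largest} possible value and not a loose over-estimate, I would exhibit a case attaining it. Taking $a = b = 10^m - 1$ gives $ab = 10^{2m} - 2\cdot 10^m + 1$, and a short check shows this number sits in the interval $[10^{2m-1}, 10^{2m})$ for all $m \ge 1$, so it has precisely $2m$ digits. Combining the upper bound with this witness shows the bound $2m$ is both valid and tight.

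The only real subtlety, rather than a genuine obstacle, is the boundary bookkeeping: one must be careful that the strict inequality $ab < 10^{2m}$ forbids $2m+1$ digits while the witness confirms $2m$ is reached, and one should note the edge behaviour for small $m$ (the product of two one-digit numbers can have as few as one digit, so the claim is about the maximum, not every product). I expect the argument to go through cleanly once the logarithmic digit-count formula is invoked; the hard part, if any, is merely stating the base convention explicitly, since the lemma as written omits it.
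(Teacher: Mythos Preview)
Your proof is correct and follows essentially the same approach as the paper: bound the product by $(10^m-1)^2$ and read off the digit count via a base-ten logarithm. If anything, your version is a bit more careful---you use the correct digit-count formula $\lfloor\log_{10}x\rfloor+1$ and explicitly verify tightness---whereas the paper simply computes $\lceil\log_{10}P\rceil=2m$ without further comment.
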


\begin{proof}
Largest possible $m$-digit number is (10$^m$-1). The largest possible product $P$ of two $m$-digit numbers is, thus,
\begin{equation*}
(10^m-1)^2 = (10^{2m} - 2 \cdot 10^m + 1)
\end{equation*}
The number of digits in $P$ is
\begin{equation*}
\lceil\log_{10}P\rceil =2m
\end{equation*}
\end{proof}

\begin{lemma} \label{Ref2}
If the number of digits in the sum of $q$ $m$-digit numbers is $N$ then $N < m + \lceil\log_{10}q\rceil$, where $q,m \in \mathbb{N}$.
\end{lemma}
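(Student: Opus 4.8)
The plan is to avoid reasoning about carries digit-by-digit and instead bound the sum from above by a clean power of ten, then translate that size bound into a digit count. The single fact I would lean on throughout is that any positive integer $S$ has exactly $\lfloor \log_{10} S \rfloor + 1$ decimal digits, so the whole problem reduces to controlling how large the sum can get.

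First I would bound each summand exactly as in Lemma \ref{Ref1}: the largest possible $m$-digit number is $10^m - 1$, so each of the $q$ terms is at most $10^m - 1$, and adding them gives
\begin{equation*}
S \leq q(10^m - 1) = q \cdot 10^m - q < q \cdot 10^m.
\end{equation*}
Next I would absorb the factor $q$ into the exponent through the ceiling. Since $\lceil \log_{10} q \rceil \geq \log_{10} q$, we have $q \leq 10^{\lceil \log_{10} q \rceil}$, and therefore
\begin{equation*}
S < 10^m \cdot 10^{\lceil \log_{10} q \rceil} = 10^{\,m + \lceil \log_{10} q \rceil}.
\end{equation*}
Taking base-$10$ logarithms yields $\log_{10} S < m + \lceil \log_{10} q \rceil$, and because the right-hand side is an integer strictly larger than $\log_{10} S$, the floor satisfies $\lfloor \log_{10} S \rfloor \leq m + \lceil \log_{10} q \rceil - 1$. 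Substituting into the digit-count identity then gives $N = \lfloor \log_{10} S \rfloor + 1 \leq m + \lceil \log_{10} q \rceil$.

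The step I expect to be the main obstacle is extracting the \emph{strict} inequality asserted in the statement, rather than the non-strict bound the estimates naturally produce. The two inequalities above are sharp in different regimes: $S < q \cdot 10^m$ is always strict, whereas $q \leq 10^{\lceil \log_{10} q \rceil}$ degenerates to an equality exactly when $q$ is a power of ten. When $q$ is such a power and every summand attains $10^m - 1$ (for example $q = 10$, $m = 1$, giving $S = 90$ and $N = 2 = m + \lceil \log_{10} q \rceil$), the chain closes on equality, so in full generality only $N \leq m + \lceil \log_{10} q \rceil$ holds. I would therefore either restrict attention to $q$ that is not a power of ten — where the second inequality becomes strict and the claimed bound follows verbatim — or present the clean $\leq$ form and remark that it is exactly what the space-complexity estimate of section \ref{pseudocode7} needs, since an additive difference of one in the digit length does not affect the quadratic-time conclusion.
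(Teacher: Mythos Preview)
Your approach is essentially the paper's: bound the maximal sum by $q(10^m-1) < q\cdot 10^m$ and convert that size bound into a digit count. The paper compresses the whole argument into the single line
\[
N = \bigl\lceil \log_{10}\!\bigl(q(10^m-1)\bigr)\bigr\rceil < \bigl\lceil \log_{10}(q\cdot 10^m)\bigr\rceil = m + \lceil \log_{10} q\rceil,
\]
using $\lceil\log_{10}\cdot\rceil$ as the digit-count formula and pushing the strict inequality through the ceiling without justification.

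Your hesitation at the last step is well placed, and in fact your argument is the more careful one. The paper's chain is loose in two spots: the formula $\lceil\log_{10} S\rceil$ miscounts whenever $S$ is a power of ten, and $x<y$ does not in general force $\lceil x\rceil < \lceil y\rceil$. Your counterexample $q=10$, $m=1$, $S=90$ is genuine --- the lemma as stated with a strict ``$<$'' is simply false --- so the non-strict bound $N\le m+\lceil\log_{10}q\rceil$ you derive is the correct statement, and as you observe it is exactly what the space estimate in section~\ref{pseudocode7} actually uses.
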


\begin{proof}
(10$^m$-1) being the largest $m$-digit number,
\begin{equation*}
N = \lceil \log_{10}(10^m-1)\cdot q \rceil
\\ < \lceil \log_{10}(q \cdot 10^m) \rceil
\\ = m + \lceil\log_{10}q\rceil
\end{equation*}
\end{proof}

I shall be using these lemmas while anchoring some parameters in the identity presented in section \ref{proposal}, and also while proving the space complexity of my proposed algorithm for the case of non-negative integers in section \ref{pseudocode7}.

\subsection{My proposed identity in bilinear form} \label{proposal}

The following identity based on number-theoretic algebra forms the foundation of the algorithm design, shown in section \ref{pseudocode7}. The choice of integer $P$ should be a practical one. Any decimal number $n_d$ multiplied by an exponent $r$ of 10 pads $r$ zeroes to the right of $n_d$. This padding is guided by lemmas \ref{Ref1} and \ref{Ref2}, so that the matrix product is a part of the overall block computation, similar to the Sch\"onhage's theorem shown in section \ref{schonhage}.

\begin{lemma} \label{Ref3}
Given two matrices $A,B \in \{\mathbb{N} \cup \{0\}\}^{n \times n}$ and an integer $P$,
\begin{equation*}
(AB)_{i,j} = \sum_{k=0}^{n-1} A[i][k] B[k][j] = \bigg\lfloor \frac{\sum_{k=0}^{n-1} A[i][k]10^{(n-1-k)P} \sum_{k=0}^{n-1} B[k][j]10^{kP}}{10^{(n-1)P}} \bigg\rfloor \mod 10^{P}
\end{equation*}
\end{lemma}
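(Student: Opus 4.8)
The plan is to read the two inner sums as integers written in base $10^{P}$ and to track which ``digit slot'' of their product carries the target dot product. First I would set $B = 10^{P}$ and abbreviate the operands as $X = \sum_{k=0}^{n-1} A[i][k]\, B^{\,n-1-k}$ and $Y = \sum_{k=0}^{n-1} B[k][j]\, B^{\,k}$, so that $A[i][0]$ occupies the most significant slot of $X$ while $B[0][j]$ occupies the least significant slot of $Y$. Multiplying and relabelling the second summation index as $l$ gives
\begin{equation*}
XY = \sum_{k=0}^{n-1}\sum_{l=0}^{n-1} A[i][k]\,B[l][j]\, B^{\,n-1-k+l} = \sum_{e=0}^{2n-2} c_e\, B^{\,e}, \qquad c_e = \sum_{l-k = e-(n-1)} A[i][k]\,B[l][j].
\end{equation*}
The key observation is that the slot $e = n-1$ collects exactly the diagonal terms $l = k$, so $c_{n-1} = \sum_{k} A[i][k]\,B[k][j] = (AB)_{i,j}$. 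It then remains to show that the floor-and-mod operation isolates $c_{n-1}$.

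Next I would split $\lfloor XY / B^{\,n-1}\rfloor$ into the contributions from $e \geq n-1$ and from $e < n-1$. The terms with $e \geq n-1$ are integer multiples of $B^{\,e-(n-1)}$, contributing $c_{n-1} + B\sum_{e\geq n} c_e B^{\,e-n}$, while the terms with $e < n-1$ carry strictly negative powers of $B$ and make up the fractional part. The crux is to prove this fractional part is strictly less than $1$, so that the floor annihilates it. Bounding it as a geometric series, if every coefficient satisfies $c_e \le B-1$, then $\sum_{e=0}^{n-2} c_e B^{\,e-(n-1)} \le (B-1)\sum_{t=1}^{n-1} B^{-t} = 1 - B^{-(n-1)} < 1$, as required. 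Hence $\lfloor XY / B^{\,n-1}\rfloor = c_{n-1} + B\sum_{e\geq n} c_e B^{\,e-n}$, and reducing modulo $B$ kills the second summand, which is an integer multiple of $B$, returning $c_{n-1} \bmod B$.

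The remaining obligation, and the step where Lemmas \ref{Ref1} and \ref{Ref2} do the real work, is to guarantee $c_e \le B-1 = 10^{P}-1$ for every $e$, which simultaneously secures the fractional-part bound above and ensures $c_{n-1} \bmod B = c_{n-1}$, so that nothing spills out of the target slot. Assuming each matrix entry has at most $m$ digits, Lemma \ref{Ref1} bounds every product $A[i][k]\,B[l][j]$ by $2m$ digits, and Lemma \ref{Ref2} then bounds each $c_e$, being a sum of at most $n$ such products, by fewer than $2m + \lceil\log_{10} n\rceil$ digits. Choosing $P \ge 2m + \lceil\log_{10} n\rceil$ therefore forces $c_e < 10^{P}$ for all $e$, completing the argument. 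I expect this calibration of $P$ to be the main obstacle: the statement leaves $P$ unconstrained, so the proof must make explicit the hypothesis on $P$ under which the identity holds, and care is needed because a single slot overflowing by even one unit would propagate a carry into the slot $e = n-1$ and corrupt the extracted dot product.
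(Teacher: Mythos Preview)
Your approach is correct and mirrors the paper's: expand the product into a double sum, collect by the exponent of $10^{P}$, then apply the floor and the reduction modulo $10^{P}$ to isolate the diagonal coefficient. You are in fact more careful than the paper's own proof, which writes the two implications without ever justifying the no-carry condition; your explicit calibration of $P$ via Lemmas~\ref{Ref1} and~\ref{Ref2} fills exactly the gap the paper leaves implicit---the only cosmetic fix needed is that you overload the symbol $B$ for both the base $10^{P}$ and the second matrix.
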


\begin{proof}
\begin{gather*}
\sum_{k=0}^{n-1} A[i][k]10^{(n-1-k)P} \sum_{k=0}^{n-1} B[k][j]10^{kP} = \\
\sum_{k_1=0}^{n-1} \sum_{k_2=0}^{n-1} A[i][k_1] \cdot B[k_2][j] 10^{(n-1-k_1+k_2)P} = \\
10^{2(n-1)P} \cdot A[i][0] \cdot B[n-1][j] + 10^{(2n-3)P} \sum_{k=0}^{1} A[i][k] B[k+n-2][j] + \\ \cdots 
+ 10^{nP} \sum_{k=0}^{n-2} A[i][k] B[k+1][j] + 10^{(n-1)P} \sum_{k=0}^{n-1} A[i][k] B[k][j] + \\ 10^{(n-2)P}\sum_{k=1}^{n-1} A[i][k] B[k-1][j] + \cdots + A[i][n-1] B[0][j] \\
\implies \bigg\lfloor \frac{\sum_{k=0}^{n-1} A[i][k]10^{(n-1-k)P} \sum_{k=0}^{n-1} B[k][j]10^{kP}}{10^{(n-1)P}} \bigg\rfloor = \\
10^{(n-1)P} \cdot A[i][0] \cdot B[n-1][j] + 10^{(n-2)P} \sum_{k=0}^{1} A[i][k] B[k+n-2][j] \\ + \cdots 
+ 10^{P} \sum_{k=0}^{n-2} A[i][k] B[k+1][j] + \sum_{k=0}^{n-1} A[i][k] B[k][j] \\
\implies \bigg\lfloor \frac{\sum_{k=0}^{n-1} A[i][k]10^{(n-1-k)P} \sum_{k=0}^{n-1} B[k][j]10^{kP}}{10^{(n-1)P}} \bigg\rfloor \mod 10^{P} = \sum_{k=0}^{n-1} A[i][k] B[k][j]
\end{gather*}
\end{proof}

\section{My proposed algorithm} \label{ProposedAlgorithm}

In this section, I first present my algorithm for the non-negative integers, followed by a working example. Then small modifications in the mechanism are presented to deal with the negative integers, floating point numbers and complex numbers.

\subsection{Pseudocode for the case of the non-negative integers} \label{pseudocode7}

The problem for the case of positive integers is being approached in the following way: first the rows of the pre-multiplicand and the columns of the post-multiplicand by joining the apt number of zeroes, a parameter $P$ that is dependent on the theoretical background set up on lemmas 2.3 and 2.4. This compresses the operand matrices into a $n \times 1$ matrix and $1 \times n$ matrix, whose product will be further broken down. The entire implementation is completely based on the identity I have proposed in section \ref{proposal}.

\begin{pseudocode}{matrixMultiplyPOSITIVEinteger}{A[[0,0]...[n-1,n-1]],B[[0,0]...[n-1,n-1]]}
m \GETS 0 \\
C \GETS \{0,0,0 ... (n$ $times) ... 0\} \\
D \GETS \{0,0,0 ... (n$ $times) ... 0\} \\
E \GETS \{\{0,0,0 ... (n$ $times) ... 0\}, \{0,0,0 ... (n$ $times) ... 0\}, .... (n$ $times) .... \{0,0,0 ... (n$ $times) ... 0\} \} \\
\FOR i \GETS 0 \TO n-1 \DO
	\BEGIN
	\FOR j \GETS 0 \TO n-1 \DO
		\BEGIN
		\IF m < A[i][j] \THEN m \GETS A[i][j] \\
		\IF m < B[i][j] \THEN m \GETS B[i][j] \\
		\END \\
	\END \\
M \GETS \lceil \log_{10}m \rceil \\
P \GETS \lceil \log_{10}n(10^{2M}-1) \rceil \\
\FOR i \GETS 0 \TO n-1 \DO
	\BEGIN
	\FOR j \GETS 0 \TO n-1 \DO
		C[i] \GETS C[i] \cdot 10^P + A[i][j]
	\END \\
\FOR j \GETS 0 \TO n-1 \DO
	\BEGIN
	\FOR i \GETS 0 \TO n-1 \DO
		D[j] \GETS D[j] \cdot 10^P + B[n-1-i][j]
	\END \\
\FOR i \GETS 0 \TO n-1 \DO
	\BEGIN
	\FOR j \GETS 0 \TO n-1 \DO
	E[i][j] \GETS  \lfloor \frac{C[i] \cdot D[j]}{10^{P(n-1)}} \rfloor \mod 10^P
	\END \\
\RETURN{E}\\
\end{pseudocode}

Clearly since most of the programmatically atomic instructions in the mentioned algorithm are iterated $n^2$ times, $n$ being the size of the operands, the time complexity is $O(n^2)$. The majority of the space of the memory occupied as a consequence of the pseudocode is by the arrays $C$, $D$ and $E$. Each element of array $C$ requires $n \times P$ number of decimal digits, where $P = \lceil \log_{10}n(10^{2M}-1) \rceil$ and $M$ is the maximum number of digits of the elements of the operands. Following the lines of lemma \ref{Ref2} the total size of the array $C$ as well as $D$ is $2n^2P = 2 n^2 \lceil \log_{10}n(10^{2M}-1) \rceil = O\bigg(n^2 log(n)\bigg)$. For the computation of each element of $E$ is the product of two elements of $C$ and $D$, which require $nP$ decimal digits. From lemma \ref{Ref1}, each element of $E$ requires $2nP$ decimal digits. So in total, the array occupies $2n^3P < 2 n^3 (2M+\log_{10}n)$. Thus, the overall space complexity is $O\bigg(n^3 log(n)\bigg)$. 

\subsection{Example of the working of the algorithm} \label{example}

Let us assume the following two $3 \times 3$ matrices for the positive integer matrix multiplication algorithm.

\begin{equation*}
A =
  \begin{bmatrix}
    1 & 2 & 3 \\
    4 & 5 & 6 \\
    7 & 8 & 9
  \end{bmatrix}, \\
B =
  \begin{bmatrix}
    9 & 8 & 7 \\
    6 & 5 & 4 \\
    3 & 2 & 1
  \end{bmatrix}
\end{equation*}

The values of $M$ and $P$ are 1 and 3 respectively. Consequently matrices $C$ and $D$ of the algorithm are as follows.

\begin{equation*}
C = \begin{bmatrix}
		1002003 & 4005006 & 7008009
	\end{bmatrix}
\end{equation*}

\begin{equation*}
D = \begin{bmatrix}
		3006009 & 2005008 & 1004007
	\end{bmatrix}
\end{equation*}

The product $C[0] \times D[0]$ contains the element $E[0][0]$, which has been shown in the bold below.

\begin{equation*}
C[0] \times D[0] = 1002003 \times 3006009 = 3,012,\textbf{030},036,027
\end{equation*}

Similarly, following the algorithm, the resultant matrix $E$ is given as the one below.

\begin{equation*}
E =
  \begin{bmatrix}
    30 & 24 & 18 \\ 
    84 & 69 & 54 \\
    138 & 114 & 90
  \end{bmatrix}
\end{equation*}

\subsection{Handling the case of the negative integers} \label{Negative}

Here the idea is to separate every single negative integer into two non-negative integers; trivially, for all $m>0$, $(-m) = 0 - m$. So for each of the operand matrices $A$, there are two such non-negative matrices $A_1$ and $A_2$, such that,

\begin{equation*}
A = A_1 - A_2
\end{equation*}

where

\begin{equation*}
A_1[i,j] = \Bigg\{
\begin{array}{ll}
      0 & A[i][j] < 0 \\
      A[i][j] & A[i][j] \geq 0\\
\end{array} \\ \\
A_2[i,j] = \Bigg\{
\begin{array}{ll}
      -(A[i][j]) & A[i][j] < 0 \\
      0 & A[i][j] \geq 0\\
\end{array}
\end{equation*}

For example, consider the following matrix and its decomposition.

\begin{equation*}
  \begin{bmatrix}
    3 & -2 & 1 \\ 
    -8 & 6 & 5 \\
    18 & -14 & -9
  \end{bmatrix} = \\
  \begin{bmatrix}
    3 & 0 & 1 \\
    0 & 6 & 5 \\
    18 & 0 & 0
  \end{bmatrix} - \\
  \begin{bmatrix}
    0 & 2 & 0 \\
    8 & 0 & 0 \\
    0 & 14 & 9
  \end{bmatrix}
\end{equation*}

So the product of the matrices $A = A_1 - A_2$ and $B = B_1 - B_2$ containing integers, I decompose them in the following way.

\begin{equation*}
AB = (A_1 - A_2)(B_1 - B_2) = A_1 B_1 - A_2 B_1 - A_1 B_2 + A_2 B_2
\end{equation*}

As is shown above, the matrix multiplication for integers, in general, can be decomposed into four instances of the previous case for positive integers. The pseudocode below for the method matrixMultiplyGENERALinteger() is self-explanatory to the approach discussed above.

\begin{pseudocode}{matrixMultiplyGENERALinteger}{A[[0,0]...[n-1,n-1]],B[[0,0]...[n-1,n-1]]}
m \GETS 0 \\
A1 \GETS \{\{0,0,0 ... (n$ $times) ... 0\}, \{0,0,0 ... (n$ $times) ... 0\}, .... (n$ $times) .... \{0,0,0 ... (n$ $times) ... 0\} \} \\
A2 \GETS \{\{0,0,0 ... (n$ $times) ... 0\}, \{0,0,0 ... (n$ $times) ... 0\}, .... (n$ $times) .... \{0,0,0 ... (n$ $times) ... 0\} \} \\
B1 \GETS \{\{0,0,0 ... (n$ $times) ... 0\}, \{0,0,0 ... (n$ $times) ... 0\}, .... (n$ $times) .... \{0,0,0 ... (n$ $times) ... 0\} \} \\
B2 \GETS \{\{0,0,0 ... (n$ $times) ... 0\}, \{0,0,0 ... (n$ $times) ... 0\}, .... (n$ $times) .... \{0,0,0 ... (n$ $times) ... 0\} \} \\
\FOR i \GETS 0 \TO n-1 \DO
	\BEGIN
	\FOR j \GETS 0 \TO n-1 \DO
		\BEGIN
		\IF A[i][j] \geq 0
		\THEN
		\BEGIN
		A1[i][j] \GETS A[i][j] \\ A2[i][j] \GETS 0
		\END
		\ELSE
		\BEGIN
		A1[i][j] \GETS 0 \\ A2[i][j] \GETS -A[i][j]
		\END \\
		\IF B[i][j] \geq 0
		\THEN
		\BEGIN
		B1[i][j] \GETS B[i][j] \\ B2[i][j] \GETS 0
		\END
		\ELSE
		\BEGIN
		B1[i][j] \GETS 0 \\ B2[i][j] \GETS -B[i][j]
		\END \\
		\END \\
	\END \\
C1 \GETS matrixMultiplyPOSITIVEinteger(A1,B1) \\
C2 \GETS matrixMultiplyPOSITIVEinteger(A1,B2) \\
C3 \GETS matrixMultiplyPOSITIVEinteger(A2,B1) \\
C4 \GETS matrixMultiplyPOSITIVEinteger(A2,B2) \\
C \GETS \{\{0,0,0 ... (n$ $times) ... 0\}, \{0,0,0 ... (n$ $times) ... 0\}, .... (n$ $times) .... \{0,0,0 ... (n$ $times) ... 0\} \} \\
\FOR i \GETS 0 \TO n-1 \DO
	\BEGIN
	\FOR j \GETS 0 \TO n-1 \DO
		C[i][j] \GETS C1[i][j] - C2[i][j] - C3[i][j] + C4[i][j]
	\END \\
\RETURN{C}\\
\end{pseudocode}

Clearly, other than the four invocations of the method matrixMultiplyPOSITIVEinteger(), all require $O(n^2)$ iterations of programmatically atomic instructions. Thus, time complexity of the matrix multiplication still remains as $O(n^2)$. Similarly, additional two-dimensional matrices do not affect the space complexity. And as will be visible in sections \ref{FloatingPoint} and \ref{ComplexMatrix}, the paradigm adopted still runs in quadratic time and does not affect time and space complexity.

\subsection{Handling the case of floating point numbers} \label{FloatingPoint}

Here the case of non-negative floating point numbers is shown. One can use the following guidelines to take care of the negative ones using a strategy similar to the one discussed in section \ref{Negative}. The basic rationale is to

\begin{enumerate}
	\item Find the maximum number of digits $R_1,R_2$ to the right of the decimal point for each of the operands.
	\item Multiply each of them with scalars obtained by raising 10 to the exponents $R_1$ and $R_2$ respectively, to transform the operands into non-negative integer matrices.
	\item Use the method matrixMultiplyPOSITIVEinteger() explained in section \ref{pseudocode7} with the transformed operands.
	\item Divide the resultant matrix by a scalar $10^{R_1+R_2}$.
\end{enumerate}

This strategy has been well-illustrated in the pseudocode for the method matrixMultiplyFLOATpoint() below, that uses a trivial method numberOfDecimalDigits() that calculates the number of digits to the right of the decimal point for a given floating point number. During it's implementation on a microprocessor, one needs to be careful in handling the data types.

\begin{pseudocode}{numberofDecimalDigits}{x}
	\RETURN{-(\lfloor \log_{10}(x-\lfloor x \rfloor) \rfloor+1}
\end{pseudocode}

\begin{pseudocode}{matrixMultiplyFLOATpoint}{A[[0,0]...[n-1,n-1]],B[[0,0]...[n-1,n-1]]}
R_1,R_2 \GETS 0,0 \\
A1 \GETS \{\{0,0,0 ... (n$ $times) ... 0\}, \{0,0,0 ... (n$ $times) ... 0\}, .... (n$ $times) .... \{0,0,0 ... (n$ $times) ... 0\} \} \\
B1 \GETS \{\{0,0,0 ... (n$ $times) ... 0\}, \{0,0,0 ... (n$ $times) ... 0\}, .... (n$ $times) .... \{0,0,0 ... (n$ $times) ... 0\} \} \\
\FOR i \GETS 0 \TO n-1 \DO
	\BEGIN
	\FOR j \GETS 0 \TO n-1 \DO
		\BEGIN
		\IF R_1 < numberOfDecimalDigits(A[i][j])
		\THEN R_1 \GETS numberOfDecimalDigits(A[i][j]) \\ 
		\IF R_2 < numberOfDecimalDigits(B[i][j])
		\THEN R_2 \GETS numberOfDecimalDigits(B[i][j]) \\
		\END \\
	\END \\
\FOR i \GETS 0 \TO n-1 \DO
	\BEGIN
	\FOR j \GETS 0 \TO n-1 \DO
		\BEGIN
		A1[i][j] \GETS A[i][j] \cdot 10^{R_1} \\
		B1[i][j] \GETS B[i][j] \cdot 10^{R_2}
		\END \\
	\END \\
C \GETS matrixMultiplyPOSITIVEinteger(A1,B1)\\
\FOR i \GETS 0 \TO n-1 \DO
	\BEGIN
	\FOR j \GETS 0 \TO n-1 \DO
		C[i][j] \GETS \frac{C[i][j]}{10^{R_1+R_2}}
	\END \\
\RETURN{C}\\
\end{pseudocode}

\subsection{Handling the case of complex numbers} \label{ComplexMatrix}

This case is similar, and even simpler, than that of the negative numbers discussed in section \ref{Negative}. Any $n \times n$ matrix $A \in \mathbb{C}^{n \times n}$, can be broken into two matrices $A_r,A_i \in \mathbb{R}^{n \times n}$

\begin{equation*}
A = A_r + \iota A_i
\end{equation*}

where $\iota$ is the imaginary unit $\sqrt{-1}$. Thus, following the rules of complex algebra, the product of two complex matrices $A,B$ can be written as

\begin{equation*}
AB = (A_r + \iota A_i)(B_r + \iota B_i) = (A_r B_r - A_i B_i) + \iota (A_r B_i + A_i B_r)
\end{equation*}

Again, similar to the case of negative integers, problem gets reduced to four instances of matrix multiplication of real numbers, as is clearly visible in the pseudocode discussed below. This algorithm is implemented by the method matrixMultiplyCOMPLEXnumber() that uses method matrixMultiplyFLOATpoint() discussed in section \ref{FloatingPoint}.

\begin{pseudocode}{matrixMultiplyCOMPLEXnumber}{A[[0,0]...[n-1,n-1]],B[[0,0]...[n-1,n-1]]}
A_r \GETS Real(A) \\ \COMMENT{Real(A) method takes the real part of all the elements of A $\in \mathbb{C}^{n \times n}$} \\
B_r \GETS Real(B) \\
A_i \GETS Im(A)\\ \COMMENT{Im(A) method takes the imaginary part of all the elements of A $\in \mathbb{C}^{n \times n}$} \\
B_i \GETS Im(B)\\
C_1 \GETS matrixMultiplyFLOATpoint(A_r,B_r) \\
C_2 \GETS matrixMultiplyFLOATpoint(A_r,B_i) \\
C_3 \GETS matrixMultiplyFLOATpoint(A_i,B_r) \\
C_4 \GETS matrixMultiplyFLOATpoint(A_i,B_i) \\
C \GETS \{\{0,0,0 ... (n$ $times) ... 0\}, \{0,0,0 ... (n$ $times) ... 0\}, .... (n$ $times) .... \{0,0,0 ... (n$ $times) ... 0\} \} \\
\FOR i \GETS 0 \TO n-1 \DO
	\BEGIN
	\FOR j \GETS 0 \TO n-1 \DO
		C[i][j] \GETS (C_1[i][j]-C_4[i][j]) + \iota (C_2[i][j]+C_3[i][j])
	\END \\
\RETURN{C}\\
\end{pseudocode}

\section{Experimental results} \label{experiment}

The algorithm, described in method matrixMultiplyPOSITIVEinteger() of section \ref{pseudocode7}, has been compared with the IJK-algorithm and Strassen's algorithm in terms of time and space complexity. Other sophisticated algorithms with theoretically faster in terms of asymptotic complexity \cite{williams, coppersmith, cohn, legall, pan, stothers} are really hard and quixotic to implement in a digital computer, and some of them do not guarantee of complete accuracy too \cite{schonhage}. The implementation of the three algorithms are done in Python language, that uses efficient memory management using pymalloc allocator. Figures 1 and 2 show a comparison of space and time complexity respectively. The memory footprint (in bytes) is calculated very carefully considering only the arrays $C,D$ and $E$ in the pseudocode described in the section \ref{pseudocode7}. The justification to the trend in the time complexity is conspicuous, but that of the space complexity is not. The IJK method does not consume anything more than the output matrix, the Strassen method requires 24 intermediate matrices for the 25 steps required in every recursive call \cite{strassen_first} and our algorithm consumes $O\Big(n^3 log(n)\Big)$ space as shown in section \ref{pseudocode7}.

\begin{figure}
\includegraphics[clip, trim = 1.2cm 0.01cm 0.01cm 0.01cm, scale=0.75]{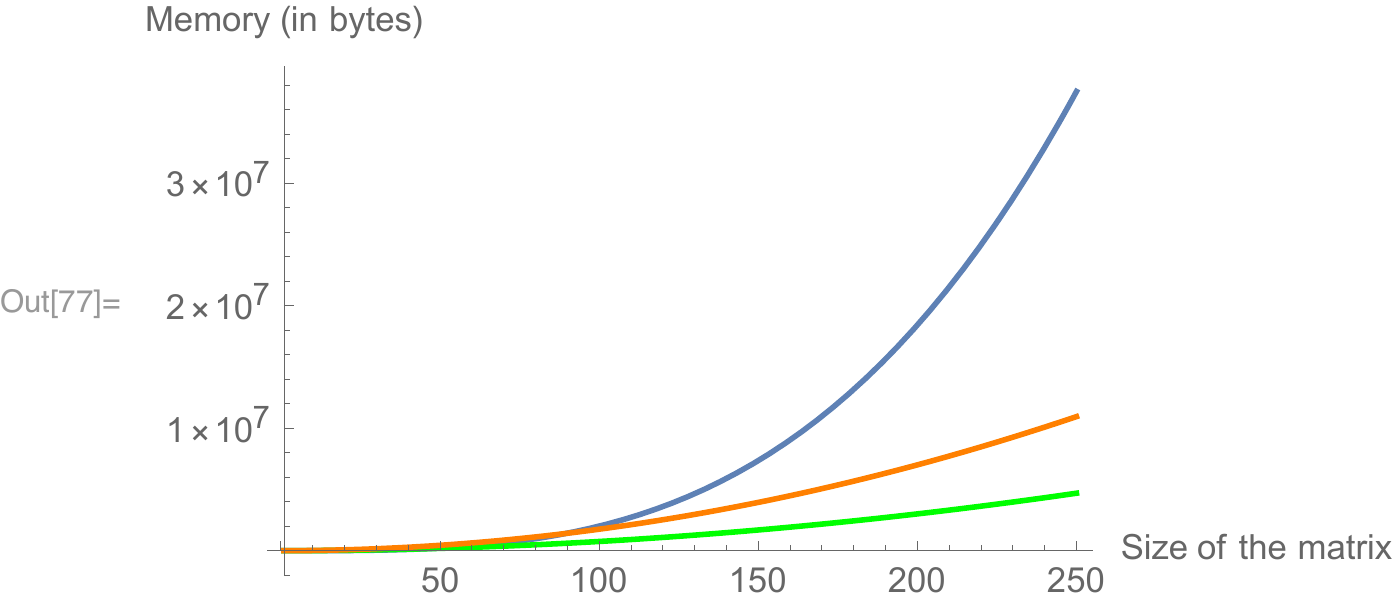}
\caption{A graph of memory footprint (in bytes) of the algorithms against the size of the input matrix. The blue curve represents the trend of the memory footprint for the proposed algorithm, the orange curve represents that of the Strassen's algorithm and the green curve represents that of the schoolbook algorithm.}
\end{figure}

\begin{figure}
\includegraphics[clip, trim = 1.2cm 0.01cm 0.01cm 0.01cm, scale=0.75]{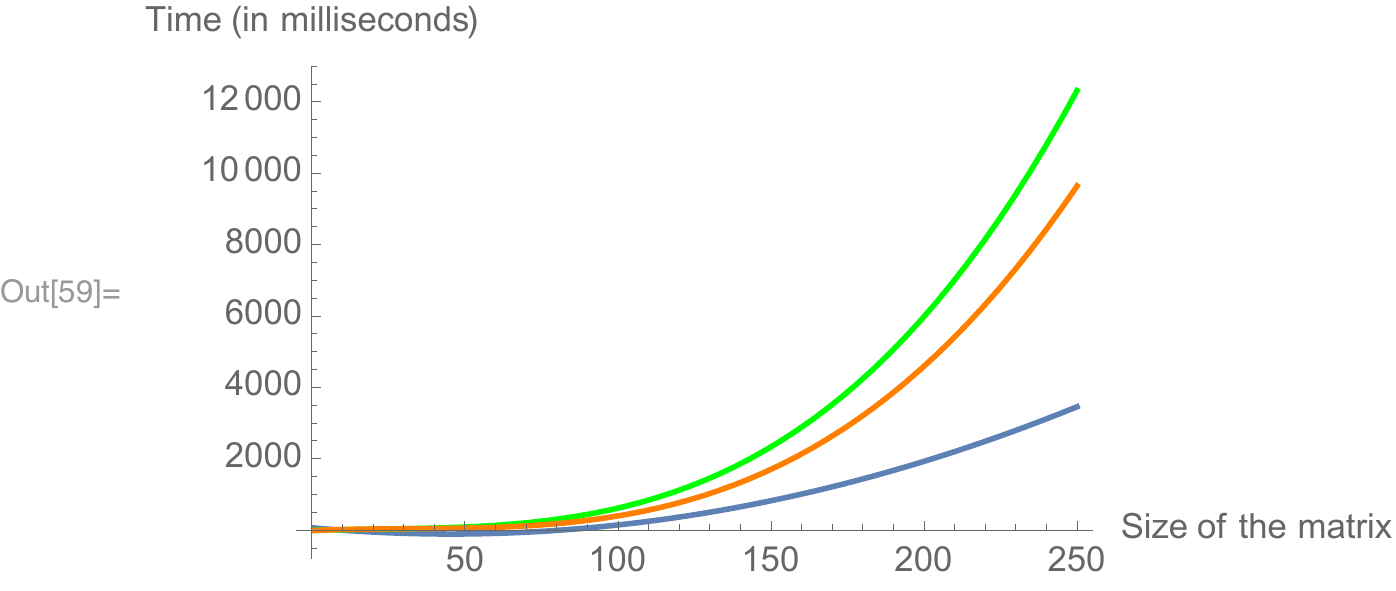}
\caption{A graph of time consumption (in milliseconds) of the algorithms against the size of the input matrix. The blue curve represents the trend of the time complexity for the proposed algorithm, the orange curve represents that of the Strassen's algorithm and the green curve represents that of the schoolbook algorithm.}
\end{figure}

\section{Conclusion} \label{conclusion}

This article begins with revisiting some important symbolic computations that form the basis of some prevalent sub-cubic matrix multiplication algorithms. Such computations reveal the complexity of the algorithm only in the asymptotic domain, and the implementation is highly involved and consequently will incur high overheads. Then following trivial number-theoretic foundations, the quadratic-time algorithm is presented first in the bilinear form and then in the form of a pseudocode. Due to the nature of our solution, separate cases for integers, reals and complex numbers are considered in the design of the algorithm. Finally, experimental results are presented that compare the algorithm with the prevalently used ones in terms of time and memory.

The paper attempts to close in upon the conjecture that the matrix multiplication exponent $\omega = 2$, and probably does more than that. The pseudocode reveals that the implementation would demand large amount of memory, but would definitely be faster than schoolbook and Strassen's algorithm. In the future, exploitation of parallel architecture, such as CPU-GPU interaction, multicore setup etc., can lead to a sub-quadratic design. Formalising the same requires the concept of cellular automata. Also, since our algorithm is completely based on number theoretic lemmas, this cannot be directly or simply relied upon in the case of symbolic computations. This, I believe, shall be a challenge from here on, as one can definitely come up with an algorithm using Sch\"onhage's theorem, but would again require very high computational power due to consequently high tensor powers and high degree of empiricism and approximation. An optimal quadratic algorithm for symbolic computation, according to me, should start from different axioms for it to be usable in a modern-day microprocessor.

\end{document}